\documentclass[letterpaper, 10 pt, conference]{ieeeconf}  

\IEEEoverridecommandlockouts                             

\makeatletter
\let\proof\relax
\let\endproof\relax
\makeatother

\usepackage{graphicx,xcolor} 
\usepackage{amsmath,amsfonts,amsthm}
\usepackage[caption=false,font=footnotesize]{subfig}
\newtheorem{thm}{Theorem}
\newtheorem{lem}{Lemma}
\newtheorem{prop}{Proposition}
\newtheorem{rem}{Remark}
\usepackage{cite}

\usepackage{hyperref}
\hypersetup{colorlinks=true,linkcolor=black,filecolor=black,urlcolor=black,citecolor=black}
\newcommand{\V}{\mathcal{V}}
\newcommand{\Z}{\mathbb{Z}_{\geq 0}}

\title{\LARGE \bf
On the Convergence of an Opinion--Action Coevolution Model with Bounded Confidence
}

\author{Chen Song$^{1,2}$, Angela Fontan$^{2}$, Rong Su$^{1}$, Julien M. Hendrickx$^{3}$, \\ Vladimir Cvetkovic$^{4}$, Karl H. Johansson$^{2}$
\thanks{This work was supported in part by the National Research Foundation, Singapore through its Medium Sized Center for Advanced Robotics Technology Innovation (CARTIN) under Project WP2.7, and in part by the project ``Humanizing the Sustainable Smart City (HiSS)" within the KTH Digital Futures Research Program.}
\thanks{$^{1}$Chen Song and Rong Su are with the School of Electrical and Electronic Engineering, Nanyang Technological University, Singapore, 639798. Email: song0249@e.ntu.edu.sg; rsu@ntu.edu.sg}%
\thanks{$^{2}$Chen Song, Angela Fontan, and Karl H. Johansson are with the Division of Decision and Control Systems, School of Electrical Engineering and Computer Science, KTH Royal Institute of Technology, 114 28 Stockholm, Sweden. Email: \{chensong, angfon, kallej\}@kth.se}%
\thanks{$^{3}$Julien M. Hendrickx is with the ICTEAM Institute, UCLouvain, B-1348, Louvain-la-Neuve, Belgium. His work is supported by the ``SIDDARTA'' Concerted Research Action (ARC). Email: julien.hendrickx@uclouvain.be}%
\thanks{$^{4}$Vladimir Cvetkovic is  with the Division of Resources, Energy and Infrastructure, School of Architecture and the Built Environment, KTH Royal Institute of Technology, 114 28 Stockholm, Sweden. Email: vdc@kth.se}%
}

\begin{document}

\maketitle
\thispagestyle{empty}
\pagestyle{empty}

\begin{abstract}
This paper presents a theoretical convergence analysis for an opinion--action coevolution model that integrates the opinion updating rule of the Hegselmann-Krause model with a utility-based decision-making mechanism. The model is reformulated into an augmented state-space representation, where the state matrix induces a time-varying social interaction digraph. The convergence analysis is grounded on two existing theoretical findings that establish convergence for the Hegselmann-Krause type of models and containment control systems with multiple stationary leaders, respectively. Results indicate that, if the structure of the interaction digraph stabilizes within finite time, the model either converges to consensus, where all agents’ opinions and actions reach an identical state, or exhibits clustering, where some opinion nodes act as stationary leaders while the remaining nodes approach the convex hull formed by the leaders. Numerical simulations are then provided to validate the theoretical results.
\end{abstract}

\section{Introduction} \label{sec1}
The study of opinion dynamics has attracted substantial attention from the systems and control community in recent years \cite{ref1, ref2, ref3, ref3.5}. A variety of mathematical models have been developed to capture how individual opinions evolve under the influence of social interactions. The most representative models include the DeGroot model \cite{ref4}, the Friedkin–Johnsen (FJ) model \cite{ref5}, and the bounded-confidence class of models \cite{ref6, ref7, ref8}, exemplified by the Hegselmann–Krause (HK) model \cite{ref8}. In the HK model, each agent updates its opinion by averaging the opinions of those whose values differ from its own by less than a given confidence threshold. The bounded-confidence updating rule reflects the tendency of individuals to interact with others sharing similar opinions, a phenomenon known as `homophily' in social psychology \cite{ref9}. We refer interested readers to \cite{ref2} and \cite{ref3} for a comprehensive overview of bounded confidence opinion dynamics. 

Inspired by the attitude-behavior gap observed in social psychology \cite{ref10}, our earlier work \cite{ref11} proposed an opinion--action coevolution model that integrates the HK-type opinion updating rule with a utility-based decision-making mechanism. The model employs a sequential updating mechanism, where each agent first revises its opinion based on the observed actions of its neighbors and then updates its action according to its revised opinion and the group’s average action. It has been shown in \cite{ref11} that, by adjusting two key parameters, the model manages to capture a range of socio-psychological phenomena, demonstrating its conceptual validity and strong relevance to actual human social behavior. However, the model remains analytically unexplored, and there is a lack of understanding of its dynamical properties. 

In this work, we address this research gap by establishing convergence results for the opinion--action coevolution model. Our main contributions are summarized as follows. First, we reformulate the model into an augmented state-space representation by defining a composite state vector that jointly encapsulates all agents’ opinions and actions. Second, we characterize the structure of the time-varying interaction digraph induced by the state matrix of the augmented discrete-time system. Then, we establish convergence of the system based on its structural properties and two existing theorems: one for HK-type models \cite{ref12} and the other for containment control systems with multiple stationary leaders \cite{ref13}. Under the assumption that the structure of the system’s interaction digraph stabilizes within finite time, we show that the discrete-time system satisfies the conditions of either one of the theorems in \cite{ref12, ref13}, depending on the topology of the interaction digraph. Finally, the overall convergence results for the opinion-action coevolution model are illustrated through numerical simulations.

The remainder of this paper is organized as follows: Section~\ref{sec2} provides technical preliminaries and introduces the HK model. Section~\ref{sec3} reviews the proposed opinion--action coevolution model and reformulates it into an augmented state-space representation. Section~\ref{sec4} establishes the main convergence results. Section~\ref{sec5} presents numerical evidence validating the theoretical results. Finally, Section~\ref{sec6} concludes the study and proposes directions for future work.   

\section{Preliminaries} \label{sec2}
\textit{Notations:} The sets of nonnegative and positive integers are denoted by \(\Z\) and \(\mathbb{Z}_{>0}\), respectively. The cardinality of a set \(\mathcal{S}\) is denoted by \(|\mathcal{S}|\). The convex hull of a set \(\mathcal{S}\) is denoted by \(\operatorname{conv}(\mathcal{S})\). The distance from a point \(x\) to a set \(\mathcal{S}\) is denoted by \(\mathrm{dist}(x,\mathcal{S}):= \inf_{y \in \mathcal{S}} ||x-y||_2\), where \(\inf\) denotes the infimum and \(||\cdot||_2\) denotes the Euclidean 2-norm.

\subsection{Linear Algebraic Preliminaries} \label{sec2.1}

Given a square matrix \( A=[a_{ij}] \in \mathbb{R}^{n \times n} \), the \( (i,j) \)-th entry of \(A\) is denoted as \(a_{ij}\) or \( [A]_{ij} \). All column vectors are denoted by bold lowercase letters, e.g., \( \mathbf{x} \in \mathbb{R}^n \). The transpose of a matrix \( A \) or vector \( \mathbf{x} \) is denoted as \( A^\top \) or \( \mathbf{x}^\top \). The identity matrix of dimension \(n \times n\) is denoted by \( I_n \in \mathbb{R}^{n \times n} \), and the symbol \( \mathbf{1}_n=[1 \, \cdots \, 1]^\top \in \mathbb{R}^n \) denotes the column vector of all ones. A matrix \(A \in \mathbb{R}^{n \times n}\) is nonnegative if \(a_{ij} \geq 0\) for all \(i, j \in \{1,2,\dots,n\}\). A matrix \(A\) is row-stochastic if it is nonnegative and satisfies \(A\mathbf{1}_n=\mathbf{1}_n\), i.e., each row of \(A\) sums to one. A matrix \(A\) is row-substochastic if it is nonnegative and satisfies \(A\mathbf{1}_n \leq \mathbf{1}_n\), with at least one row-sum strictly less than 1.

\subsection{Graphs} \label{sec2.2}

Let \( \mathcal{G}(A)=(\mathcal{V},\mathcal{E},A) \) denote a weighted digraph, where \(\mathcal{V}=\{1,2,\dots,n\}\) is the node index set and \( \mathcal{E} \subseteq \mathcal{V} \times \mathcal{V}\) is the set of directed edges. The nonnegative matrix \(A \in \mathbb{R}^{n \times n}\) is the weighted adjacency matrix corresponding to \(\mathcal{G}(A)\), where \(a_{ij}>0\) if and only if \( (j,i)\in \mathcal{E} \). A self-loop \((i,i)\in \mathcal{E}\) is a directed edge from node \(i\) to itself. A directed walk from node \(i\) to \(j\) is an ordered sequence of edges \((i,i_1),(i_1,i_2),\dots,(i_{s-1},i_s),(i_s,j)\) selected from \(\mathcal{E}\). For simplicity, we denote a digraph by \(\mathcal{G}\) or \(\mathcal{G}(A)\).

A digraph \(\mathcal{G}\) is strongly connected if there exists a directed walk from \(i\) to \(j\) for every pair of nodes \(i,j \in \mathcal{V}\), i.e., every node is reachable from every other node. A digraph \( \mathcal{G}'=(\mathcal{V}',\mathcal{E}') \) is called a subgraph of \( \mathcal{G}=(\mathcal{V},\mathcal{E}) \) if \(\mathcal{V}' \subseteq \mathcal{V}\) and \(\mathcal{E}' \subseteq \mathcal{E} \,\cap\, (\mathcal{V}' \times \mathcal{V}') \). In particular, the subgraph of \( \mathcal{G}=(\mathcal{V},\mathcal{E}) \) induced by \(\mathcal{V}' \subseteq \mathcal{V}\) is the digraph \( \mathcal{G}'=(\mathcal{V}',\mathcal{E}') \), where \(\mathcal{E}'=\{(i,j) \in \mathcal{E} \mid i,j \in \mathcal{V}' \}\). A subgraph \(\mathcal{G}'\) is a strongly connected component (SCC) of \(\mathcal{G}\) if it is strongly connected and any other subgraph of \(\mathcal{G}\) strictly containing \(\mathcal{G}'\) is not strongly connected. A singleton SCC of \(\mathcal{G}\) contains only one node. The condensation digraph of \(\mathcal{G}\), denoted by \(C(\mathcal{G})\), is a digraph whose nodes correspond to the SCCs of \(\mathcal{G}\), and there exists a directed edge from node \(i\) to node \(j\) in \(C(\mathcal{G})\) if and only if there exists a directed edge from a node of SCC\(_i\) to a node of SCC\(_j\) in \(\mathcal{G}\). By definition, \(C(\mathcal{G})\) has  no self-loops. A singleton node of \(C(\mathcal{G})\) corresponds to a singleton SCC of \(\mathcal{G}\). In a digraph \(\mathcal{G}\), a source is a node with no incoming edges, and a sink is a node with no outgoing edges.    

\subsection{The Hegselmann-Krause Model} \label{sec2.3}

Let $\V=\{1,2,\dots,n\}$ denote the index set of \(n\) agents. Each agent \(i \in \V\) holds an opinion \(x_i(t)\in [0,1]\) at time \( t \) regarding a certain issue. In the HK model \cite{ref8}, each agent's opinion is updated as:
\begin{equation} \label{e0}
x_i(t+1) = \frac{1}{|N_i(t)|} \sum_{j \in N_i(t)} x_j(t), \;\; i \in \V,\, t\in \Z,
\end{equation}
where \(N_i(t)\) represents agent \(i\)'s neighbor set, defined as:
\begin{equation} \label{e0+}
N_i(t) = \{ j \in \V \mid |x_i(t) - x_j(t)| \leq \epsilon \}.
\end{equation}
The parameter \(\epsilon \in [0,1]\) indicates the homogeneous confidence threshold, which represents agents' susceptibility to interpersonal social influence. It has been established in \cite{ref14, ref15, ref16} that for every \(i \in \V\),  \(x_i(t)\) converges to a limit \(x_i^\ast\) in finite time, and for any \(i,j \in \V\), either \(x_i^\ast = x_j^\ast\) or \(|x_i^\ast - x_j^\ast| > \epsilon\) holds. In other words, the HK model converges in finite time and the inter-cluster distances at equilibrium are bounded below by \(\epsilon\). 

The following result, as stated in \cite{ref12}, provides a general convergence theorem applicable to the HK-type models and serves as a key tool in the convergence analysis of our model in Section~\ref{sec4}.

\begin{thm}[Theorem~2 in \cite{ref12}] \label{thm1}  
Let \( x : \mathbb{N} \to \mathbb{R}^{n} \) satisfy 
\[
x_i(t+1) = \sum_{j=1}^{n}  a_{ij}(t) x_j(t), \quad i = 1, \dots, n,
\]
where \( a_{ij}(t) \geq 0 \) for all \( i, j, t\), and \( \sum_{j=1}^{n} a_{ij}(t) = 1 \) for all \( i \) and \( t \). Suppose that the following conditions hold:
\begin{enumerate}
\item[(i)] \textit{Lower bound on positive coefficients:} there exists some constant \( \alpha > 0 \) such that if \( a_{ij}(t)>0 \), then \( a_{ij}(t) \geq \alpha \) for all \( i, j,\) and \( t \).  
\item[(ii)] \textit{Positive diagonal coefficients:} there exists some constant \( \beta > 0 \) such that \( a_{ii}(t) \geq \beta \) for all \( i \) and \( t \).
\item[(iii)] \textit{Cut-balance:} for any nonempty proper subset \( S \) of \(\{1, \dots, n\} \), there exist \( i \in S \) and \( j \notin S \) with \( a_{ij}(t)>0 \) if and only if there exist \( i' \in S \) and \( j' \notin S \) with \( a_{j'i'}(t)>0 \). 
\end{enumerate}
Then, the limit \( x_i^\ast = \lim_{t \to \infty} x_i(t) \) exists, \( \forall \, i \in \{1,\dots,n\} \). Furthermore, consider the digraph \(\mathcal{G}=(\mathcal{V},\mathcal{E})\) in which \((j,i) \in \mathcal{E}\) if \(a_{ij}(t)>0\) infinitely often. If \(i\) and \(j\) belong to the same strongly connected component of \(\mathcal{G}\), then \(x_i^\ast = x_j^\ast\).  
\end{thm}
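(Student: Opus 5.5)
Since Theorem~\ref{thm1} is quoted from \cite{ref12}, the plan is to reconstruct the Hendrickx--Tsitsiklis argument rather than appeal to any result of the present paper. The approach has three stages: first order the agents by their current values, then use cut-balance to build a quadratic-type Lyapunov comparison on the sorted values, and finally deduce convergence and identify the limits.

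\textbf{Step 1 (boundedness and sorting).} Row-stochasticity gives $\min_i x_i(t)$ nondecreasing and $\max_i x_i(t)$ nonincreasing, so every trajectory stays in a compact interval. At each $t$ relabel the agents so that $x_1(t)\le \dots\le x_n(t)$. For each $k$, take $S_k=\{1,\dots,k\}$, the $k$ smallest values. Cut-balance (iii) says that if some agent in $S_k$ listens to an agent outside it, then some agent outside listens to an agent inside. Condition (i) turns this into a quantitative statement: the "upward pull" on $S_k$ is matched by a "downward pull" of weight at least $\alpha$ on the complement.

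\textbf{Step 2 (Lyapunov comparison, the main obstacle).} The plan is to use the tools of \cite{ref12}, an order-preserving comparison with a bounded nondecreasing quantity. The first attempt would be to show that for each $k$ a weighted sum of the smallest $k$ values is "almost" nondecreasing. Failing that, I would exhibit a function $V(t)=\sum_i f_i(x_{(i)}(t))$ on the order statistics with exponentially separated weights, and prove that $V$ is monotone. The decrease of $V$ per step should dominate a positive multiple ($\alpha$, $\beta$ dependent) of $\sum_{i,j}a_{ij}(t)|x_i(t)-x_j(t)|$ over edges crossing a cut. Making this precise is the hard step. Sorting changes over time, and cut-balance only controls the \emph{existence} of reverse edges, not their weights. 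This is exactly why (i) and (ii) are needed: (ii) prevents oscillation, and (i) converts existence into weight. I would try first the sequence construction in \cite{ref12} with weights $1,\delta,\delta^2,\dots$ for small $\delta$ depending on $\alpha$, $\beta$, $n$.

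\textbf{Step 3 (convergence and limits).} Summing the decrease over time gives
\[
\sum_{t}\sum_{i,j}a_{ij}(t)\,|x_i(t)-x_j(t)|<\infty .
\]
This bound, together with monotonicity of the boundary order statistics (peeling off the minimum cluster inductively), gives that each $x_i(t)$ converges. If $a_{ij}(t)>0$ infinitely often, then $a_{ij}(t)\ge\alpha$ at those times, and summability forces $|x_i(t)-x_j(t)|\to0$ along them. Hence $x_i^\ast=x_j^\ast$ for every edge of $\mathcal{G}$, and transitivity along directed walks yields equality within each strongly connected component.
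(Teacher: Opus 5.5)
The paper gives no proof of this theorem; it imports it from \cite{ref12}. Your proposal therefore has to stand on its own, and it does not. Step~2 carries the whole content of the theorem, and it is only announced (``I would exhibit\dots'', ``making this precise is the hard step''). No function is actually defined and no one-step inequality is proved. Step~3 rests on the summability $\sum_t\sum_{i,j}a_{ij}(t)|x_i(t)-x_j(t)|<\infty$, which is therefore unsupported. Steps~1 and~3 are fine once that bound is granted. Step~3 is even simpler than you make it: $|x_i(t+1)-x_i(t)|\le\sum_j a_{ij}(t)|x_j(t)-x_i(t)|$ gives every $x_i$ finite total variation, so each converges without any peeling of clusters. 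Your edge argument for equal limits is correct.

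You name two difficulties, and the first, weights, is not really one. Conditions (i) and (iii) already give a quantitative cut balance. If a cut carries positive weight in one direction, it carries at least $\alpha$ in the other, while any cut weight is at most $n$. So the ratio is at most $K=n/\alpha$.

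The difficulty you do not name is a \emph{length} mismatch. A downward pull $a_{kj}(t)(x_k-x_j)$ can span a large distance, while the reverse edge guaranteed across the cut may span only the single gap at the cut. So edges cannot simply be matched to edges. The missing idea is to split every pull into its contributions across the consecutive gaps $g_l=x_{(l+1)}-x_{(l)}$ of the sorted vector. Then apply cut balance separately to each cut $S_l$ formed by the $l$ lowest agents. Take $V=\sum_k w_k x_{(k)}$ with $w_{l+1}\le w_l/(2K)$, and let $U_l,D_l$ be the total upward and downward weights across $S_l$. Across gap $l$ the gain is at least $w_lU_lg_l$, and the loss is at most $w_{l+1}D_lg_l\le Kw_{l+1}U_lg_l$. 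This settles every step in which no two agents cross.

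The second difficulty, which you name but give no mechanism for, is exactly such crossings in discrete time. An agent can overtake many others in one step. For example, a top agent with $a_{kk}=\beta$ and the rest of its weight on the bottom agent lands near the minimum. The natural bound $V(x(t+1))-V(x(t))\ge\sum_i w_{r_{t+1}(i)}\,(x_i(t+1)-x_i(t))$, with $r_{t+1}$ the new ranks, weights each agent by its \emph{new} rank. So an agent that has just dropped to the bottom receives the largest weight. This is where (ii) must be used in an essential way. The choice $a_{12}=a_{21}=1$ satisfies (i) and (iii) and makes $x$ oscillate, and ``(ii) prevents oscillation'' is not an argument. Until you exhibit a function and prove its one-step monotonicity in the presence of crossings, you have an outline of where the proof should go rather than a proof.
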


\section{Problem Statement} \label{sec3}

In this section, we first review the opinion--action coevolution model proposed in our earlier work \cite{ref11}, and then reformulate it into an augmented state-space representation to facilitate the subsequent convergence analysis. 

\subsection{Model Description} \label{sec3.1}

We consider a population of \( n \) agents, also indexed by \( \V = \{1, 2, \dots, n\} \). Each agent \( i \in \V \) is characterized by an opinion variable \( x_i \) and action variable \( y_i \), where \( x_i, y_i \in [0,1] \). At each discrete time step \( t \in \Z \), agents update their opinions and actions sequentially. 

The opinion updating rule is given by:
\begin{equation} \label{e1}
x_i(t+1) = \frac{\sum_{j \in \mathcal{N}_i(t)} y_j(t) + x_i(t)}{|\mathcal{N}_i(t)| + 1}, \; i \in \mathcal{V},
\end{equation}
where each agent's neighbor set \( \mathcal{N}_i(t) \) is defined as:
\begin{equation} \label{e2}
\mathcal{N}_i(t) = \{ j \in \mathcal{V} \mid j \neq i, \ |x_i(t) - y_j(t)| \leq \epsilon \},
\end{equation}
where \( \epsilon \in [0,1] \) is the homogeneous confidence threshold parameter as defined in the HK model.
The action updating rule is given by:
\begin{equation} \label{e3}
y_i(t+1) = \phi \, x_i(t+1) + (1-\phi) \, y_{\mathrm{avg}}(t), \; i \in \mathcal{V},
\end{equation}
where $y_{\mathrm{avg}}(t) = \frac{1}{n} \sum_{k=1}^n y_k(t)$ is the group's average action, and \( \phi \in [0,1] \) is the homogeneous decision weight parameter that reflects the extent to which each agent is committed to its own opinion when making a decision.

When \( \phi =1 \), agents are fully committed to their opinions when making a decision, i.e., \(y_i=x_i, \, \forall \, i \in \mathcal{V}\). When \( \phi =0 \), agents completely conform to the group's average action, which can be interpreted as the social norm \cite{ref17}. In reality, \(y_{\mathrm{avg}}(t)\) could be disseminated to each agent via social media.  

\begin{rem} \label{remark0}
\upshape The opinion--action coevolution model assumes that each agent’s opinion is private and known only to themselves, while their actions are public and observable to others. It has been demonstrated in \cite{ref11} that this model could capture and explain several real-world phenomena from a socio-psychological perspective. We refer interested readers to \cite{ref11} for a comprehensive account of our model. It is worth noting that when \( \phi=1 \), our model reduces to the HK model. 
\end{rem}

\subsection{Augmented State-Space Representation} \label{sec3.2}

For the analysis to be presented in Section~\ref{sec4}, it is useful to reformulate the opinion--action coevolution model \eqref{e1}--\eqref{e3} into the following augmented state-space representation:
\begin{equation} \label{eq:opinion-action-z}
\mathbf{z}(t+1) = P(t)\, \mathbf{z}(t), \ \ t \geq 1,
\end{equation}
where the augmented state vector is defined as:   
\begin{equation} \label{e6}
\begin{aligned}
\mathbf{z}(t) &= [\, z_1(t), \cdots, z_n(t), z_{n+1}(t), \cdots, z_{2n}(t) \,]^\top \\
       &= [\, x_1(t), \cdots, x_n(t), y_1(t-1), \cdots, y_n(t-1) \,]^\top.
\end{aligned}
\end{equation}
The state vector \( \mathbf{z} \in \mathbb{R}^{2n} \) and state matrix \( P \in \mathbb{R}^{2n \times 2n} \) can be partitioned as follows:
\begin{equation*}
\mathbf{z}(t+1) = 
\begin{bmatrix}
\mathbf{z}_1(t+1) \\[6pt] \mathbf{z}_2(t+1)
\end{bmatrix}
= 
\begin{bmatrix}
P_{11}(t) & P_{12}(t) \\[6pt]
P_{21}(t) & P_{22}(t) 
\end{bmatrix}
\begin{bmatrix}
\mathbf{z}_1(t) \\[6pt] \mathbf{z}_2(t)
\end{bmatrix},
\end{equation*}
where \( \mathbf{z}_1 = [z_1, \cdots, z_n]^\top \) and \( \mathbf{z}_2 = [z_{n+1}, \cdots, z_{2n}]^\top \).

To derive the state matrix \(P(t)\), we proceed as follows. Substituting \eqref{e3} into \eqref{e1}, we obtain: 
\begin{equation*}
\begin{aligned}
&x_i(t+1)  \\[2pt]
&= \frac{x_i(t) + \phi \sum_{j \in \mathcal{N}_i(t)}x_j(t) + (1-\phi) \sum_{j \in \mathcal{N}_i(t)}y_{\mathrm{avg}}(t-1)}{|\mathcal{N}_i(t)|+1} \\[2pt]
&= \frac{x_i(t) + \phi \sum_{j \in \mathcal{N}_i(t)}x_j(t) + \tfrac{1-\phi}{n}\, |\mathcal{N}_i(t)| \sum_{k=1}^n y_k(t-1)}{|\mathcal{N}_i(t)|+1}.
\end{aligned}
\end{equation*}
Using the augmented state \eqref{e6}, for all \(i \in \mathcal{V}\) we rewrite it as: 
\begin{equation*}
\begin{aligned}
&z_i(t+1) \\[2pt]
&= \frac{z_i(t)+\phi \sum_{j \in \mathcal{N}_i(t)}z_j(t)+\tfrac{1-\phi}{n}\, |\mathcal{N}_i(t)| \sum_{k=1}^n z_{n+k}(t)}{|\mathcal{N}_i(t)|+1}.
\end{aligned}
\end{equation*}
Thus, the block matrices \( P_{11}(t) \) and \( P_{12}(t) \) are given by:
\begin{gather}
\big[P_{11}(t)\big]_{ij} = 
\begin{cases}
\tfrac{1}{|\mathcal{N}_i(t)|+1}, & j=i, \\[3pt]
\tfrac{\phi}{|\mathcal{N}_i(t)|+1}, & j \in \mathcal{N}_i(t), \\[3pt]
0, & \text{otherwise}, 
\end{cases}
\;\, i,j=1, \dots, n,
\notag\\[6pt]
\label{e12}
\big[P_{12}(t)\big]_{ij}
= \dfrac{(1-\phi) |\mathcal{N}_i(t)|}{(|\mathcal{N}_i(t)|+1)\,n},
\quad i,j=1, \dots, n.
\end{gather}
In other words, each row \( i \) of \( P_{11}(t) \) consists of a diagonal entry \( \tfrac{1}{|\mathcal{N}_i(t)|+1} \), off-diagonal entries \( \tfrac{\phi}{|\mathcal{N}_i(t)|+1} \) in the neighbor columns, and \( 0 \) elsewhere, while each row \( i \) of \( P_{12}(t) \) contains identical entries equal to \( \tfrac{(1-\phi)|\mathcal{N}_i(t)|}{(|\mathcal{N}_i(t)|+1)\,n} \). 

Substituting $y_{\mathrm{avg}}(t)$ into \eqref{e3}, we obtain:
\begin{equation*}
y_i(t) = \phi \, x_i(t) + \frac{1-\phi}{n} \sum_{k=1}^n y_k(t-1).
\end{equation*}
Using the augmented state \eqref{e6}, for all \(i \in \mathcal{V}\) we rewrite it as: 
\begin{equation*}
z_{n+i}(t+1) = \phi \, z_i(t) + \frac{1-\phi}{n} \sum_{k=1}^n z_{n+k}(t).
\end{equation*}
Thus, the block matrices \( P_{21}(t) \) and \( P_{22}(t) \) are given by:
\begin{gather}
\label{e15}
P_{21}(t)  = \phi I_n,
\quad 
P_{22}(t) = \tfrac{1-\phi}{n} \, \mathbf{1}_n\mathbf{1}_n^\top.
\end{gather} 
Note that the submatrices \( P_{21} \) and \( P_{22} \) are time-invariant, whereas \( P_{11} \) and \( P_{12} \) change over time as a function of \(\mathbf{z}(t)\) since each agent's neighbor set \(\mathcal{N}_i(t)\) may be time-varying. 

\section{Convergence analysis} \label{sec4}
In this section, we establish the convergence of the opinion--action coevolution model as shown in \eqref{eq:opinion-action-z}. We first examine the state matrix \(P(t)\), as defined in Section~\ref{sec3.2}, to verify whether it satisfies the conditions of Theorem~\ref{thm1} and to analyze its structural properties. Then we present the main convergence results based on these properties.  

\subsection{Properties of the State Matrix \(P(t)\)}
We first show that the state matrix \(P(t)\) is always row-stochastic (Lemma~\ref{lem1}). Then, we verify that \( P(t) \) satisfies conditions~(i) and~(ii) of Theorem~\ref{thm1} when \( \phi\in(0,1) \) (Lemmas~\ref{lem2} and~\ref{lem3}). Next, we propose a sufficient condition under which \( P(t) \) satisfies condition~(iii) of Theorem~\ref{thm1} (Proposition~\ref{prop1}). Lastly, we establish a key result on the connectivity of the weighted digraph associated with \( P(t) \) when \( \phi\in(0,1) \) (Theorem~\ref{thm2}).

\begin{lem} \label{lem1}
For all \( t\geq 1 \), \( P(t) \) is a row-stochastic matrix.  
\end{lem}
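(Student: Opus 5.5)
The argument is a direct verification of the two defining properties of a row-stochastic matrix, nonnegativity and unit row sums, using the explicit block entries of \(P(t)\) in Section~\ref{sec3.2}. I would treat the first \(n\) rows (blocks \(P_{11}(t), P_{12}(t)\)) and the last \(n\) rows (blocks \(P_{21}, P_{22}\)) separately. Throughout, fix \(t\geq 1\) and write \(m_i = |\mathcal{N}_i(t)|\in\{0,1,\dots,n-1\}\).

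\emph{Nonnegativity.} Every entry of \(P_{11}(t)\) is one of \(\tfrac{1}{m_i+1}\), \(\tfrac{\phi}{m_i+1}\) or \(0\), and these are nonnegative because \(\phi\in[0,1]\) and \(m_i+1\geq 1\). The entries of \(P_{12}(t)\) in \eqref{e12} are \(\tfrac{(1-\phi)m_i}{(m_i+1)n}\geq 0\), since \(1-\phi\geq 0\) and \(m_i\geq 0\). Likewise, \(P_{21}=\phi I_n\) and \(P_{22}=\tfrac{1-\phi}{n}\mathbf{1}_n\mathbf{1}_n^\top\) in \eqref{e15} have nonnegative entries.

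\emph{Row sums, first block row.} For \(i\in\mathcal{V}\), note that \(i\notin\mathcal{N}_i(t)\) by \eqref{e2}. Hence row \(i\) of \(P_{11}(t)\) contains exactly one diagonal entry and exactly \(m_i\) off-diagonal entries equal to \(\tfrac{\phi}{m_i+1}\). Its sum is therefore \(\tfrac{1+\phi m_i}{m_i+1}\). Row \(i\) of \(P_{12}(t)\) has \(n\) identical entries, so its sum is \(n\cdot\tfrac{(1-\phi)m_i}{(m_i+1)n}=\tfrac{(1-\phi)m_i}{m_i+1}\). Adding the two gives \(\tfrac{1+\phi m_i+(1-\phi)m_i}{m_i+1}=1\).

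\emph{Row sums, second block row.} Row \(n+i\) has sum \(\phi + n\cdot\tfrac{1-\phi}{n}=1\).

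The only point requiring care is the disjointness between the diagonal index and \(\mathcal{N}_i(t)\), which keeps the diagonal weight from being counted twice. The exclusion \(j\neq i\) in \eqref{e2} guarantees this. The degenerate case \(m_i=0\) is also consistent with the formulas: the row then reduces to a diagonal entry equal to \(1\) with zeros elsewhere. Combining nonnegativity with the row-sum identities gives \(P(t)\mathbf{1}_{2n}=\mathbf{1}_{2n}\), which proves the lemma for every \(t\geq1\).
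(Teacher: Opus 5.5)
Your proposal is correct and follows the same route as the paper: you check nonnegativity of the four blocks, then compute the row sums \(\tfrac{1+\phi m_i}{m_i+1}+\tfrac{(1-\phi)m_i}{m_i+1}=1\) for the first \(n\) rows and \(\phi+n\cdot\tfrac{1-\phi}{n}=1\) for the last \(n\) rows. Your explicit remark that \(i\notin\mathcal{N}_i(t)\), so the diagonal weight is not double-counted, is a detail the paper leaves implicit.
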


\proof
From the block matrices' definitions in \eqref{e12} and \eqref{e15}, it is apparent that all entries of \( P(t)=\big[p_{ij}(t)\big] \in \mathbb{R}^{2n \times 2n} \) are nonnegative for all \( t\geq 1\). 

For each row \( i \in \{1,2,\dots,n\} \) in the upper blocks \( \big[P_{11}(t), P_{12}(t)\big]  \), we have: 
\[
\sum_{j=1}^{2n} p_{ij}(t)
=\frac{1+\phi|\mathcal{N}_i(t)|}{|\mathcal{N}_i(t)|+1}+\frac{(1-\phi)|\mathcal{N}_i(t)|}{|\mathcal{N}_i(t)|+1}=1,
\]
while for each row \( i \in \{n+1,n+2,\dots,2n\}\) in the lower blocks \( \big[P_{21}(t), P_{22}(t)\big]  \), the row-sum is equal to:
\[
\sum_{j=1}^{2n} p_{ij}(t)
=\phi+\frac{(1-\phi)n}{n}=1.
\]

\noindent In other words, \(P(t)\) is nonnegative and each row sums to one. Thus, \( P(t) \) is row-stochastic for all \( t\geq 1 \).
\endproof

\begin{lem} \label{lem2}
When \(\phi\in (0,1)\), there exists a constant \( \alpha>0 \) such that if \( p_{ij}(t)>0 \), then \( p_{ij}(t) \geq \alpha\) for all \( i, j \in \{1,2,\dots,2n\}\) and \(t \geq 1 \). 
\end{lem}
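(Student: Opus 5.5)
The plan is a direct enumeration of the possible nonzero entries of \(P(t)\), using the explicit block formulas \eqref{e12} and \eqref{e15}. The key observation is that every nonzero entry depends on \(t\) only through the integer \(|\mathcal{N}_i(t)|\). Since \(\mathcal{N}_i(t)\subseteq \V\setminus\{i\}\), this integer always lies in \(\{0,1,\dots,n-1\}\). So the nonzero entries of \(P(t)\), over all \(t\geq 1\) and all \(i,j\), range over a finite set of positive numbers, and we may take \(\alpha\) to be the minimum of that set.

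To make this concrete, I will go block by block, writing \(m=|\mathcal{N}_i(t)|\in\{0,\dots,n-1\}\).

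In \(P_{11}(t)\), the diagonal entry is \(\tfrac{1}{m+1}\geq \tfrac1n\). The off-diagonal nonzero entries (neighbor columns) equal \(\tfrac{\phi}{m+1}\). These occur only when \(m\geq 1\), and they satisfy \(\tfrac{\phi}{m+1}\geq \tfrac{\phi}{n}\).

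In \(P_{12}(t)\), the entries are zero when \(m=0\). When \(m\geq1\) they equal \(\tfrac{(1-\phi)m}{(m+1)n}\). Because \(m\mapsto \tfrac{m}{m+1}\) is increasing, this is at least \(\tfrac{1-\phi}{2n}\).

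In \(P_{21}\), the nonzero entries are exactly \(\phi\). In \(P_{22}\), all entries equal \(\tfrac{1-\phi}{n}\).

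Collecting the bounds, I will set
\[
\alpha=\min\Big\{\tfrac{\phi}{n},\ \tfrac{1-\phi}{2n}\Big\}.
\]
This is positive precisely because \(\phi\in(0,1)\), and it lower-bounds every candidate value listed above. The values not covered directly by the two terms, namely \(\tfrac1n\), \(\phi\), and \(\tfrac{1-\phi}{n}\), each dominate one of them.

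The only point needing care is that some entries vanish for particular values of \(m\): \(P_{12}\) when \(m=0\), and the neighbor entries when \(m=0\). The statement only constrains positive entries, so I must bound them only on the cases where they are actually positive. I do not expect a genuine obstacle. I will also remark why the hypothesis \(\phi\in(0,1)\) is needed: at \(\phi=0\) or \(\phi=1\) some blocks vanish identically, which is harmless for those entries, but the uniform formula for \(\alpha\) degenerates, so the open interval is the clean setting.
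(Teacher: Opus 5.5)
Your proposal is correct and follows the paper's proof essentially step for step. You use the same block-by-block enumeration of the positive entries of $P(t)$ with $|\mathcal{N}_i(t)|\le n-1$. Your $\alpha=\min\{\tfrac{\phi}{n},\tfrac{1-\phi}{2n}\}$ equals the paper's $\min\{\tfrac{\phi}{n},\tfrac{1-\phi}{2n},\phi,\tfrac{1-\phi}{n}\}$ once the dominated terms are dropped.

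One small correction to your closing remark: this lemma does not actually need $\phi\in(0,1)$. For $\phi=0$ or $\phi=1$ you can take $\alpha=\tfrac{1}{2n}$ or $\alpha=\tfrac{1}{n}$, respectively. The open interval is genuinely needed elsewhere, e.g.\ for Lemma~\ref{lem3} at $\phi=1$ and for the connectivity in Theorem~\ref{thm2}.
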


\proof
When \(\phi\in (0,1)\), consider each block matrix:

\begin{itemize}
\item \( P_{11}(t) \) has strictly positive entries equal to \( \tfrac{1}{|\mathcal{N}_i(t)|+1} \) and \( \tfrac{\phi}{|\mathcal{N}_i(t)|+1} \), which are bounded below by \( \tfrac{\phi}{n} \) since \(\phi<1\), and \( |\mathcal{N}_i(t)| \leq n-1 \) for all \( i\in \mathcal{V} \).
\item If \( |\mathcal{N}_i(t)|>0\) for some \(i \in \mathcal{V} \), \( P_{12}(t) \) has strictly positive entries equal to \( \tfrac{(1-\phi)|\mathcal{N}_i(t)|}{(|\mathcal{N}_i(t)|+1)n} \), which are bounded below by \( \tfrac{1-\phi}{2n} \), achieved when \(|\mathcal{N}_i(t)|=1\).
\item \( P_{21}(t) \) has strictly positive entries equal to \( \phi \), which serves as the positive lower bound itself.
\item \( P_{22}(t) \) has strictly positive entries equal to \( \tfrac{1-\phi}{n} \), which serves as the positive lower bound itself.  
\end{itemize}
Thus, every strictly positive entry of \(P(t)\) is bounded below by \(\alpha = \min\{\tfrac{\phi}{n},\,\tfrac{1-\phi}{2n}, \phi,\tfrac{1-\phi}{n}\}>0\) for all \(t \geq 1\).    
\endproof

\begin{lem} \label{lem3}
When \(\phi\in (0,1)\), there exists a constant \( \beta>0 \) such that \(p_{ii}(t) \geq \beta\) for all \(i \in \{1,2,\dots,2n\}\) and \(t\geq1\).
\end{lem}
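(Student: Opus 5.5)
The plan is to read off the diagonal entries of \(P(t)\) directly from the block definitions in \eqref{e12} and \eqref{e15}, treating the upper and lower halves of the index set separately. I will then take \(\beta\) to be the smaller of the two resulting uniform lower bounds.

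\emph{Rows \(i \in \{1,\dots,n\}\).} The diagonal entry \(p_{ii}(t)\) is \(\big[P_{11}(t)\big]_{ii} = \tfrac{1}{|\mathcal{N}_i(t)|+1}\); no contribution comes from \(P_{12}(t)\), since column \(i\le n\) lies in the first block. By \eqref{e2}, \(\mathcal{N}_i(t)\) excludes \(i\) itself, so \(|\mathcal{N}_i(t)| \le n-1\) and hence \(p_{ii}(t) \ge \tfrac{1}{n}\) for every \(t\ge 1\). This bound holds for any \(\phi\in[0,1]\).

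\emph{Rows \(i \in \{n+1,\dots,2n\}\).} The diagonal entry \(p_{ii}(t)\) lies in the block \(P_{22}(t)\), whose entries are all \(\tfrac{1-\phi}{n}\). This is time-invariant and strictly positive exactly because \(\phi<1\). The entry of \(P_{21}(t)=\phi I_n\) sits in column \(i-n\), not column \(i\), so it does not contribute to the diagonal.

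Combining the two cases, \(\beta=\min\{\tfrac{1}{n},\tfrac{1-\phi}{n}\}=\tfrac{1-\phi}{n}>0\) works uniformly in \(i\) and \(t\). There is no real obstacle. The only point needing care is to correctly identify which block each diagonal position falls into, in particular that \(P_{21}\) does not touch the diagonal, and to note that the assumption \(\phi\in(0,1)\) is used only through \(\phi<1\) in the lower block.
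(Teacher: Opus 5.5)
Your proposal is correct and follows essentially the same argument as the paper. You read off \(p_{ii}(t)=\tfrac{1}{|\mathcal{N}_i(t)|+1}\ge\tfrac{1}{n}\) for opinion rows and \(p_{ii}(t)=\tfrac{1-\phi}{n}\) for action rows, then take \(\beta=\tfrac{1-\phi}{n}\). Your extra remark that \(P_{21}=\phi I_n\) does not touch the diagonal is a correct clarification that the paper leaves implicit.
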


\proof
For \( i\in\{1,\dots,n\} \), \(p_{ii}(t)=\tfrac{1}{|\mathcal{N}_i(t)|+1} \geq \tfrac{1}{n}\). For \( i\in\{n+1,\dots,2n\} \), \(p_{ii}(t)=\tfrac{1-\phi}{n}>0\). Hence, every diagonal entry of \( P(t) \) is bounded below by \( \beta=\min\{\,\tfrac{1}{n},\,\tfrac{1-\phi}{n}\,\}=\tfrac{1-\phi}{n}>0\) for all \(t \geq 1\).
\endproof

It remains to examine whether the state matrix \( P(t) \) satisfies condition~(iii) of Theorem~\ref{thm1}, namely the cut-balance property, which will be analyzed from a graph-theoretic perspective. The following proposition provides a sufficient condition for a digraph to be cut-balanced.   

\begin{prop} \label{prop1}
If a digraph is strongly connected, then it is cut-balanced. 
\end{prop}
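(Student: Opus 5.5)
The plan is to first restate condition~(iii) of Theorem~\ref{thm1} in graph language, and then observe that strong connectivity makes both sides of that equivalence true for every cut. For a digraph \(\mathcal{G}(A)=(\mathcal{V},\mathcal{E},A)\) we have \(a_{ij}>0\) if and only if \((j,i)\in\mathcal{E}\). So the left-hand side of the cut-balance condition, for a nonempty proper subset \(S\subset\mathcal{V}\), says: there is an edge from some \(j\notin S\) to some \(i\in S\), i.e., an edge entering \(S\). Likewise, the right-hand side, \(a_{j'i'}>0\) with \(i'\in S\), \(j'\notin S\), says there is an edge \((i',j')\in\mathcal{E}\) leaving \(S\). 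The digraph is cut-balanced precisely when, for every such \(S\), an edge enters \(S\) if and only if an edge leaves \(S\). For a time-varying matrix this is required at each \(t\), with \(\mathcal{G}\) being the digraph of \(A(t)\).

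Next, assume \(\mathcal{G}\) is strongly connected and fix any nonempty proper subset \(S\). Pick \(u\in S\) and \(v\in\mathcal{V}\setminus S\); both exist because \(S\) is nonempty and proper. By strong connectivity there is a directed walk from \(u\) to \(v\), given by a sequence of nodes \(u=i_0,i_1,\dots,i_m=v\) with each \((i_{k-1},i_k)\in\mathcal{E}\). Since \(i_0\in S\) and \(i_m\notin S\), let \(k\) be the first index with \(i_k\notin S\). Then \(i_{k-1}\in S\), so the edge \((i_{k-1},i_k)\) leaves \(S\). Applying the same argument to a walk from \(v\) to \(u\) gives an edge entering \(S\).

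Hence both sides of the equivalence in condition~(iii) hold for every cut \(S\), so the equivalence holds trivially. The digraph is therefore cut-balanced. The only delicate point is the bookkeeping of edge orientation, since \(a_{ij}>0\) corresponds to the edge \((j,i)\). This does not affect the argument, because under strong connectivity edges cross every cut in both directions.
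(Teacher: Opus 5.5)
Your proof is correct and follows the paper's argument: pick \(u\in S\) and \(v\notin S\), then use the walks from \(u\) to \(v\) and from \(v\) to \(u\) to exhibit an edge leaving \(S\) and an edge entering \(S\), so both sides of the equivalence in condition~(iii) hold. Your ``first index with \(i_k\notin S\)'' step and your check of the edge orientation spell out details that the paper states without proof.
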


\proof
Let \( \mathcal{G}(A)=(\mathcal{V},\mathcal{E},A) \) be a strongly connected digraph with \(n\) nodes. For any nonempty proper subset \( S \subset \mathcal{V}=\{1,\dots,n\}\), its complement is denoted \( S^c = \mathcal{V} \setminus S \). Assume \( u \in S \) and \( v \in S^c \). Since \( \mathcal{G}(A) \) is strongly connected, there exists a directed walk from node \( u \) to node \( v \). This walk must include an edge \( (i',j') \in \mathcal{E} \), i.e., \( a_{j'i'}>0 \), where \( i' \in S \) and \( j' \in S^c \). Similarly, there also exists a directed walk from node \( v \) to node \( u \), which must contain an edge \( (j,i) \in \mathcal{E} \), i.e., \( a_{ij}>0 \), where \( i \in S \) and \( j \in S^c \). Therefore, edges exist in both directions across any cut of \( \mathcal{G}(A)\), which proves the cut-balance property.
\endproof

The following theorem establishes a key structural property of the weighted digraph associated with \( P(t) \).

\begin{thm} \label{thm2}
When \( \phi \in (0,1) \), the weighted digraph associated with \( P(t) \), denoted by \( \mathcal{G}\big(P(t)\big) \), has one of the following structures at any time \( t \geq 1 \):
\begin{enumerate}
\item[(i)] it is strongly connected;
\item[(ii)] its condensation digraph consists of one sink and multiple singleton sources.     
\end{enumerate}
\end{thm}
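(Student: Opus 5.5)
We read the edge structure of \(\mathcal{G}(P(t))\) directly from the blocks in \eqref{e12} and \eqref{e15}, using the convention that \(p_{ij}(t)>0\) means an edge \((j,i)\). Call nodes \(1,\dots,n\) opinion nodes and nodes \(n+1,\dots,2n\) action nodes. Three facts hold for \(\phi\in(0,1)\).

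First, \(P_{22}=\tfrac{1-\phi}{n}\mathbf{1}_n\mathbf{1}_n^\top\) has all entries positive. Hence the action nodes form a complete digraph (with self-loops) and lie inside a single strongly connected component, say \(\mathcal{C}\).

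Second, \(P_{21}=\phi I_n\) is positive on its diagonal. So every opinion node \(i\) has an edge \((i,n+i)\) into the action layer, and every opinion node reaches \(\mathcal{C}\).

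Third, row \(i\) of \(P_{12}\) is identically positive when \(|\mathcal{N}_i(t)|>0\) and identically zero otherwise. So an opinion node with a nonempty neighbor set receives edges from every action node, and hence from \(\mathcal{C}\).

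Next, we partition the opinion nodes into \(\mathcal{V}_+=\{i:|\mathcal{N}_i(t)|>0\}\) and \(\mathcal{V}_0=\{i:|\mathcal{N}_i(t)|=0\}\). Each \(i\in\mathcal{V}_+\) both reaches \(\mathcal{C}\) and is reached from \(\mathcal{C}\), so \(i\in\mathcal{C}\). Now consider \(i\in\mathcal{V}_0\). Row \(i\) of \(P(t)\) has only the diagonal entry \(1/(|\mathcal{N}_i|+1)=1\); there are no \(\phi\)-entries in \(P_{11}\) and no \(P_{12}\) entries. Therefore node \(i\) has no incoming edges other than its self-loop. It cannot be reached from any other node, so \(\{i\}\) is a singleton SCC. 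Since \(C(\mathcal{G})\) has no self-loops, this SCC is a source of the condensation digraph.

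The condensation digraph therefore consists of \(\mathcal{C}\), which contains all action nodes and all of \(\mathcal{V}_+\), together with the singletons \(\{i\}\), \(i\in\mathcal{V}_0\).
\begin{itemize}
\item If \(\mathcal{V}_0=\emptyset\), then \(\mathcal{C}\) is the whole vertex set, and \(\mathcal{G}(P(t))\) is strongly connected; this is case (i).
\item If \(\mathcal{V}_0\neq\emptyset\), each singleton has an edge into \(\mathcal{C}\) and none out of \(\mathcal{C}\). No edges run between distinct singletons, because every outgoing edge of an opinion node \(i\) is to \(n+i\) or to opinion nodes \(k\) with \(i\in\mathcal{N}_k\subseteq\mathcal{V}_+\); we should check this carefully. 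Hence \(\mathcal{C}\) is the unique sink and the singletons are sources; this is case (ii).
\end{itemize}

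The main obstacle is one edge of the reading of (ii). It can happen that every opinion node has an empty neighbor set, so that \(\mathcal{V}_+=\emptyset\). Then \(\mathcal{C}\) consists of action nodes only, and the structure is one sink with \(n\) singleton sources. That still fits (ii), with "multiple" read as "one or more". The other point requiring care is the claim that edges leaving \(\mathcal{V}_0\) land only in \(\mathcal{C}\). This follows because an edge \((i,k)\) with \(k\) an opinion node requires \(p_{ki}>0\), hence \(i\in\mathcal{N}_k(t)\), which forces \(k\in\mathcal{V}_+\subseteq\mathcal{C}\).
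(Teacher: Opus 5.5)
Your proof is correct and follows the paper's route essentially verbatim: your $\mathcal{C}$ and $\mathcal{V}_0$ are the paper's $\Omega(t)$ and $\Theta(t)$, and you use the same three block observations (the all-positive $P_{22}$, the diagonal $P_{21}=\phi I_n$, and the all-or-nothing rows of $P_{12}$). Your extra check that no edges join distinct singletons is redundant, since it already follows from each singleton having no incoming edges from other nodes; your reading of ``multiple'' as ``one or more'' matches the paper, whose case (ii) is simply $\Theta(t)\neq\emptyset$.
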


\proof
Let \( \mathcal{V}_P = \{1,2,\dots,2n\} \) denote the vertex set of \( \mathcal{G}\big(P(t)\big) \). By definition of state variables in \eqref{e6}, at any time step \( t \geq 1 \), the indices \( \{1, \dots, n\} \) correspond to the opinion nodes with values equal to \( \{ x_1(t),\dots,x_n(t) \} \), and the indices \( \{n+1, \dots, 2n\} \) correspond to the action nodes with values equal to \( \{y_1(t-1),\dots,y_n(t-1)\} \). 

From \eqref{e15}, every entry of \( P_{22}(t) \) is equal to \(\tfrac{1-\phi}{n}\), which is strictly positive when \(\phi \in (0,1)\). This implies that there is a directed edge from every action node to every other action node in \( \mathcal{G}\big(P(t)\big)\), i.e., every action node is reachable from every other action node. Similarly, \( P_{21}(t)=\phi I_n\) has positive diagonal entries when \( \phi\in (0,1) \), implying that there is a directed edge from each opinion node indexed by \( i \in \{1,\dots,n\} \) to its corresponding action node indexed by \( n+i \) in \( \mathcal{G}\big(P(t)\big)\). As a result, for any \(i,j \in \{1,\dots,n\}\), opinion node \(i\) can reach action node \(n+j\) through its corresponding action node \(n+i\), since all action nodes are mutually reachable.

Define two subsets of \( \mathcal{V}_P\) as: 
\[
\Omega(t) := \{n+1, \dots, 2n\}  \cup  \{i \in \{1,\dots,n\}\;|\; |\mathcal{N}_i(t)|>0\},
\]
\[
\Theta(t) := \{i \in \{1,\dots,n\}\;|\; |\mathcal{N}_i(t)|=0\},
\]
where \( \Omega(t) \cup \Theta(t) = \mathcal{V}_P \) and \( \Omega(t) \cap \Theta(t) = \emptyset \).

\textbf{Case (i).} If \( \Theta(t) = \emptyset \), or equivalently, \( \Omega(t)=\mathcal{V}_P \), then \( |\mathcal{N}_i(t)|>0 \) for all \( i \in \{1,\dots,n\} \), i.e., each agent has at least one neighbor at time \( t \).


From \eqref{e12}, the \((i,j)\)-th entry of \(P_{12}(t)\) is given by:
\[
\big[P_{12}(t)\big]_{ij}
= \dfrac{(1-\phi) |\mathcal{N}_i(t)|}{(|\mathcal{N}_i(t)|+1)\,n},\, \forall \, i,j \in \{1, \dots, n\},
\]
which is strictly positive when \( \phi \in (0,1) \) and \(|\mathcal{N}_i(t)|>0\). 

Hence, every entry of \(P_{12}(t)\) is strictly positive because \( |\mathcal{N}_i(t)|>0 \) for all \( i \in \{1,\dots,n\} \). This implies that there is a directed edge from every action node to every opinion node in \( \mathcal{G}\big(P(t)\big)\), i.e., every opinion node is reachable from every action node.

Recall that every opinion node can reach every action node. Thus, for any two opinion nodes \(i,j \in \{1,\dots,n\}\), they are also mutually reachable. 

Therefore, in this case, \( \mathcal{G}\big(P(t)\big)\) is strongly connected since every node is reachable from every other node. For an intuitive illustration of the digraph structure, the reader is referred to an example presented in Fig.~\ref{fig1}(b) of Section~\ref{sec5}.

\textbf{Case (ii).} If \( \Theta(t) \neq \emptyset \), or equivalently, \( \Omega(t) \subset \mathcal{V}_P \), then \( \exists \, i \in \{1,\dots,n\}\) such that \(|\mathcal{N}_i(t)|=0\), i.e., at least one agent has no neighbors at time \(t\).

For each \( i \in \Theta(t) \), the adjacency structure of opinion node \(i\) is analyzed as follows. From \eqref{e12}, the \(i\)-th row of \(P_{11}(t)\) corresponds to the \(i\)-th row of \(I_n\), with its diagonal entry equal to 1 and all other entries equal to 0. This implies that opinion node \(i\) admits a self-loop, with no incoming edges from any other opinion node. Similarly, all entries in the \(i\)-th row of \( P_{12}(t) \) are identically zero, indicating that opinion node \(i\) has no incoming edges from any action node. Therefore, each opinion node \( i \in \Theta(t) \) has no incoming edges from any other node, and thus forms a singleton source in the condensation digraph of \( \mathcal{G}\big(P(t)\big)\). As a byproduct, from \eqref{e1}, it follows that \(z_i(t+1)=x_i(t+1)=x_i(t)=z_i(t)\), i.e., the value of node \(i\) remains constant during this iteration. 

In contrast, every opinion node \(j \in \{1,\dots,n\} \setminus \Theta(t)\) remains reachable from every action node because all entries in the \(j\)-th row of \(P_{12}(t)\) are positive when \(|\mathcal{N}_j(t)|>0\). Thus, the subgraph of \( \mathcal{G}\big(P(t)\big)\) induced by \(\Omega(t)\) is still strongly connected, similar to Case~(i), which forms an SCC of \(\mathcal{G}\big(P(t)\big)\) because any other opinion node \( i \in \Theta(t) \) is not reachable from \(\Omega(t)\). The SCC formed by \(\Omega(t)\) serves as a sink in the condensation digraph of \(\mathcal{G}\big(P(t)\big)\) since it has no outgoing edges to any singleton source \(i \in \Theta(t)\).    

As a result, in this case, the condensation digraph of \(\mathcal{G}\big(P(t)\big)\) consists of one sink and multiple singleton sources. An example digraph corresponding to this structure can be found in Fig.~\ref{fig2}(b) of Section~\ref{sec5}.
\endproof

\begin{rem} \label{remark1}
\upshape It should be noted that the condition \( |\mathcal{N}_i(t)|>0 \) for all \( i \in \mathcal{V} \) is both sufficient and necessary for the digraph \( \mathcal{G}\big(P(t)\big)\) to be strongly connected at time \(t\geq1\). Case~(i) of Theorem~\ref{thm2} proves sufficiency, and Case~(ii) establishes necessity since \( \mathcal{G}\big(P(t)\big)\) is not strongly connected if some agent has no neighbors at time \(t\geq 1\).  
\end{rem}   

\subsection{Main Convergence Results}
We are now ready to state the main convergence results for the opinion--action coevolution model. As mentioned in Remark~\ref{remark0}, when \( \phi = 1 \), the model \eqref{e1}-\eqref{e3} reduces to the HK model \cite{ref8}, whose convergence has been well established in \cite{ref14, ref15, ref16}. The following lemma establishes convergence of the opinion--action coevolution model when \(\phi = 0\).

\begin{lem} \label{lem4}
When \(\phi = 0\), all agents reach a consensus in action after one update, i.e., \(y_i(t)=y_\mathrm{avg}(0)\) for all \(i \in \mathcal{V}\) and \(t \geq 1\). Each agent's opinion either remains constant, i.e., \(x_i(t)=x_i(1)\) for all \(t \geq 1\), if \(|x_i(1)-y_\mathrm{avg}(0)| > \epsilon\), or converges exponentially to \(y_\mathrm{avg}(0)\), i.e., \(\lim_{t \to \infty} x_i(t) = y_\mathrm{avg}(0)\), if \(|x_i(1)-y_\mathrm{avg}(0)| \leq \epsilon\). 
\end{lem}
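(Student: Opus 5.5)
With $\phi=0$ the action rule \eqref{e3} becomes $y_i(t+1)=y_{\mathrm{avg}}(t)$. I would handle the actions first and then the opinions.

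\textbf{Actions.} At $t=0$ every agent sets $y_i(1)=y_{\mathrm{avg}}(0)$, so all actions coincide at time $1$. Their average is then $y_{\mathrm{avg}}(1)=y_{\mathrm{avg}}(0)$, and a simple induction on $t$ gives $y_i(t)=y_{\mathrm{avg}}(0)=:c$ for all $i\in\mathcal{V}$ and $t\ge 1$.

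\textbf{Opinions.} For $t\ge1$ the neighbor set \eqref{e2} satisfies $\mathcal{N}_i(t)=\mathcal{V}\setminus\{i\}$ if $|x_i(t)-c|\le\epsilon$, and $\mathcal{N}_i(t)=\emptyset$ otherwise. In the second case \eqref{e1} gives $x_i(t+1)=x_i(t)$. In the first case \eqref{e1} gives
\[
x_i(t+1)=\frac{(n-1)c+x_i(t)}{n}, \qquad\text{so}\qquad x_i(t+1)-c=\frac{1}{n}\big(x_i(t)-c\big).
\]
Each agent's update now depends only on its own opinion and the constant $c$, so the agents decouple.

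\textbf{Invariance and conclusion.} I would show by induction that the condition at $t=1$ persists for all $t\ge1$.
\begin{itemize}
\item If $|x_i(1)-c|>\epsilon$, then $x_i$ never moves, so the inequality holds at every step and $x_i(t)=x_i(1)$ for all $t\ge 1$.
\item If $|x_i(1)-c|\le\epsilon$, the distance to $c$ shrinks by the factor $1/n\le1$, so it stays at most $\epsilon$. Hence $x_i(t)-c=n^{-(t-1)}(x_i(1)-c)$, which converges to $0$ exponentially when $n\ge2$.
\end{itemize}
The degenerate case $n=1$ is trivial: the neighbor set is always empty and $x_1$ is constant, which matches the exponential statement only if $x_1(1)=c$. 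I would note this case but not dwell on it.

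The only real care point is the indexing at $t=0$. The first action update uses $y(0)$, and the consensus on actions holds from $t=1$ onward. This is exactly why the condition in the statement is phrased in terms of $x_i(1)$ rather than $x_i(0)$.
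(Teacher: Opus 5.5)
Your proposal is correct and follows the paper's proof essentially step for step. In both, the actions collapse to $y_{\mathrm{avg}}(0)$ at $t=1$, and the neighbor set is then either all of $\mathcal{V}\setminus\{i\}$ or empty, so each opinion is either frozen or obeys the contraction $x_i(t+1)-y_{\mathrm{avg}}(0)=\frac{1}{n}\big(x_i(t)-y_{\mathrm{avg}}(0)\big)$. Your explicit induction for the invariance of each case, and your remark that for $n=1$ the convergence claim holds only when $x_1(1)=y_{\mathrm{avg}}(0)$, make precise two points the paper leaves implicit.
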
 

\proof
When \(\phi=0\), from \eqref{e3}, \(y_i(t+1)=y_\mathrm{avg}(t)\) for all \(i \in \mathcal{V}\) and \(t \geq 0\), which leads to \(y_\mathrm{avg}(t+1)=y_\mathrm{avg}(t)\) for all \(t \geq 0\). Thus, all agents' actions reach a consensus at time \(t=1\) and remain constant afterwards, i.e., \(y_i(t)=y_\mathrm{avg}(0), \, \forall \, i \in \mathcal{V}, \, \forall \, t \geq 1\).

From \eqref{e2}, if \(|x_i(1)-y_\mathrm{avg}(0)| > \epsilon\), then agent \(i\) has no neighbors at time \(t=1\). In this case, from \eqref{e1}, \(x_i(2)=x_i(1)\), which leads to \(|x_i(2)-y_\mathrm{avg}(0)| > \epsilon\) and \(x_i(3)=x_i(2)\). Hence, agent \(i\) always has no neighbors and its opinion remains constant for all \(t\geq1\), i.e., \(x_i(t)=x_i(1),\, \forall \, t \geq 1\). 

Conversely, if \(|x_i(1)-y_\mathrm{avg}(0)| \leq \epsilon\), then agent \(i\) always has \(n-1\) neighbors with identical action \(y_\mathrm{avg}(0)\) for all \(t\geq1\). In this case, its opinion updating rule \eqref{e1} can be rewritten as: \(x_i(t+1)-y_\mathrm{avg}(0)=\tfrac{1}{n}(x_i(t)-y_\mathrm{avg}(0)), \, \forall \, t\geq1\), implying that \(x_i(t)\) converges exponentially to \(y_\mathrm{avg}(0)\) with ratio \(\tfrac{1}{n}\). Thus, \(\lim_{t \to \infty} x_i(t) = y_\mathrm{avg}(0)\).  
\endproof

For \(0<\phi<1\), the digraph \(\mathcal{G}\big(P(t)\big)\) necessarily belongs to one of the two structural cases presented in Theorem~\ref{thm2} at any time step \(t \geq 1\). We analyze each of the two cases in Propositions~\ref{prop2} and~\ref{prop3}, respectively, and summarize the overall result in Theorem~\ref{thm3}. The following proposition proves convergence for Case~(i) of Theorem~\ref{thm2}.

\begin{prop} \label{prop2}
Suppose that there exists a finite time instant \(T > 0\) such that for all \(t \geq T\), the weighted digraph \(\mathcal{G}\big(P(t)\big)\) remains strongly connected. Then the system~\eqref{eq:opinion-action-z} converges to consensus, i.e., \(z_i^\ast = z_j^\ast, \, \forall \, i,j \in \{1,\dots,2n\}\). 
\end{prop}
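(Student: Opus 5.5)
We will apply Theorem~\ref{thm1} to the shifted sequence \(\mathbf{z}(t)\), \(t \geq T\). This is the system \(\mathbf{z}(t+1)=P(t)\mathbf{z}(t)\) for \(t\geq T\), and discarding the finitely many earlier steps affects neither the existence nor the value of the limit. Note that Proposition~\ref{prop2} is implicitly stated for \(\phi\in(0,1)\), the regime in which Theorem~\ref{thm2} applies. By Lemma~\ref{lem1}, each \(P(t)\) is row-stochastic, so the model fits the form required by Theorem~\ref{thm1}. Lemma~\ref{lem2} supplies the uniform lower bound \(\alpha\) on positive entries (condition~(i)). Lemma~\ref{lem3} supplies the uniform lower bound \(\beta\) on diagonal entries (condition~(ii)).

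For condition~(iii), we use the hypothesis that \(\mathcal{G}\big(P(t)\big)\) is strongly connected for every \(t\geq T\). Proposition~\ref{prop1} then shows that \(\mathcal{G}\big(P(t)\big)\) is cut-balanced at each such \(t\). We must check that the graph-theoretic cut-balance in Proposition~\ref{prop1} matches the matrix formulation in Theorem~\ref{thm1}(iii). With the convention \(p_{ij}>0 \iff (j,i)\in\mathcal{E}\), the existence of \(i\in S\), \(j\notin S\) with \(p_{ij}(t)>0\) means an edge enters \(S\). The existence of \(i'\in S\), \(j'\notin S\) with \(p_{j'i'}(t)>0\) means an edge leaves \(S\). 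Strong connectivity gives both for every nonempty proper \(S\), so the equivalence holds trivially. Theorem~\ref{thm1} therefore yields existence of \(z_i^\ast=\lim_{t\to\infty} z_i(t)\) for all \(i\in\{1,\dots,2n\}\).

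For consensus, we use the second part of Theorem~\ref{thm1}. Consider the digraph \(\mathcal{G}\) with edge \((j,i)\) whenever \(p_{ij}(t)>0\) infinitely often. It is the key object, and the main (mild) obstacle is that the graphs \(\mathcal{G}\big(P(t)\big)\) may vary over time, so some individual edges need not appear infinitely often. We avoid this by extracting a fixed strongly connected subgraph that is present at every \(t\geq T\).

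By Remark~\ref{remark1}, strong connectivity at \(t\) is equivalent to \(|\mathcal{N}_i(t)|>0\) for all \(i\). Hence for all \(t\geq T\), every entry of \(P_{12}(t)\), \(P_{22}(t)\) and the diagonal of \(P_{21}(t)\) is positive. These edges (all action\(\to\)opinion, all action\(\to\)action, and each opinion \(i\to\) action \(n+i\)) are therefore present at every \(t\geq T\), hence infinitely often. As in Case~(i) of the proof of Theorem~\ref{thm2}, these edges alone make the full node set mutually reachable. So \(\mathcal{G}\) is strongly connected, consists of a single SCC, and Theorem~\ref{thm1} gives \(z_i^\ast=z_j^\ast\) for all \(i,j\), i.e., consensus of all opinions and actions.
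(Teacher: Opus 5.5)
Your proposal is correct and follows essentially the paper's proof. Proposition~\ref{prop1} gives cut-balance, so Theorem~\ref{thm1} yields convergence; Remark~\ref{remark1} together with the time-invariant blocks $P_{21}$, $P_{22}$ then makes the infinitely-often digraph strongly connected, which gives consensus. Your restart of the system at time $T$, so that cut-balance holds at every step as Theorem~\ref{thm1} requires, and your explicit appeal to Lemmas~\ref{lem1}--\ref{lem3} are a bit more careful than the paper's phrasing that condition~(iii) holds ``infinitely often.''
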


\proof
Assume that \(\mathcal{G}\big(P(t)\big)\) is strongly connected for all \(t \geq T\). From Proposition~\ref{prop1}, a strongly connected digraph is cut-balanced, implying that the state matrix \(P(t)\) satisfies condition~(iii) of Theorem~\ref{thm1} infinitely often. By Theorem~\ref{thm1}, the limit \( z_i^\ast = \lim_{t \to \infty} z_i(t) \) exists for all \( i \in \{1,\dots,2n\} \). 

Moreover, consider the limiting digraph \(\mathcal{G}_P=(\mathcal{V}_P,\mathcal{E}_P)\), as defined in Theorem~\ref{thm1}, where \(\mathcal{V}_P=\{1,\dots,2n\}\) and \( (j,i) \in \mathcal{E}_P \) if \(p_{ij}(t)>0\) infinitely often. Since \(P_{21}\) and \(P_{22}\) are time-invariant, and \(\mathcal{G}\big(P(t)\big)\) is strongly connected for all \(t \geq T\), implying that every entry of \(P_{12}(t)\) is positive infinitely often (see Remark~\ref{remark1}), \(\mathcal{G}_P\) is also strongly connected. According to Theorem~\ref{thm1}, all states reach a consensus at convergence, i.e., \(z_i^\ast = z_j^\ast, \, \forall \, i,j \in \{1,\dots,2n\}\).        
\endproof

\begin{rem}
\upshape Theorem \ref{thm1} is originally proposed for systems whose coefficients \(a_{ij}(t)\)	are explicit functions of time \(t\). However, it has been shown in \cite{ref12} that the same result also applies to systems in which the coefficients depend on the state \(x\). For details, see \cite{ref12}.
\end{rem}

Recall from Theorem~\ref{thm2} that when \(\mathcal{G}\big(P(t)\big)\) has more than one SCC, its condensation digraph consists of a single sink and multiple singleton sources. Following the notation introduced in the proof of Theorem~\ref{thm2}, let \(\Omega(t)\) and \(\Theta(t)\) denote, respectively, the set of nodes forming the unique sink and the collection of singleton sources at time \(t\). Under this notation, the following proposition establishes convergence for Case~(ii) of Theorem~\ref{thm2}. 

\begin{prop} \label{prop3}
Suppose that there exists a finite time instant \(T>0\) such that for all \(t \geq T\), \(\Omega(t)=\Omega:=\Omega(T)\) and 
\(\Theta(t)=\Theta:=\Theta(T)\). Then \(z_i(t)=z_i(T), \, \forall \, i \in \Theta, \, \forall \, t \geq T,\) and \(\lim_{t \to \infty}\mathrm{dist}(z_j(t), \operatorname{conv}\{z_i(T) \!\mid\! i \in \Theta \})=0, \, \forall \, j \in \Omega\), i.e., the states of nodes in \(\Omega\) approach the convex hull spanned by the constant states of nodes in \(\Theta\). 
\end{prop}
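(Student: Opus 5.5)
The plan is to treat the nodes in \(\Theta\) as stationary leaders and those in \(\Omega\) as followers in a containment-control system, and then invoke the containment result of \cite{ref13}. If that result is unavailable in the needed form, a direct contraction argument serves as a backup.

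\textbf{Leaders are frozen.} For \(t \geq T\) and \(i \in \Theta = \Theta(t)\), the proof of Theorem~\ref{thm2} (Case~(ii)) shows that row \(i\) of \(P(t)\) equals the \(i\)-th unit row vector. Hence \(z_i(t+1) = z_i(t)\), and by induction \(z_i(t) = z_i(T)\) for all \(t \geq T\).

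\textbf{Block form of the dynamics.} Reorder the coordinates as \(\mathbf{z} = [\mathbf{z}_\Omega^\top, \mathbf{z}_\Theta^\top]^\top\). For \(t \geq T\) the system reads
\[
\mathbf{z}_\Omega(t+1) = A(t)\,\mathbf{z}_\Omega(t) + B(t)\,\mathbf{z}_\Theta(T), \qquad \mathbf{z}_\Theta(t+1) = \mathbf{z}_\Theta(T),
\]
where \([A(t), B(t)]\) consists of the \(\Omega\)-rows of \(P(t)\). These rows are stochastic by Lemma~\ref{lem1}. Lemmas~\ref{lem2} and~\ref{lem3} give a uniform lower bound \(\alpha\) on positive entries and positive diagonal entries, and by Theorem~\ref{thm2} the subgraph induced by \(\Omega\) is strongly connected at every \(t \geq T\).

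\textbf{At least one follower is pinned to the leaders.} Since \(\Theta \neq \emptyset\), \(B(t) \neq 0\). The action node \(n+i\), with \(i \in \Theta\), receives weight \(\phi \geq \alpha\) from opinion node \(i\). So \(A(t)\) is row-substochastic with a row deficit of at least \(\alpha\).

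\textbf{Consequence.} Every follower is connected, within at most \(|\Omega|\) steps, to a node that hears a leader. These are exactly the hypotheses of the containment theorem for multiple stationary leaders in \cite{ref13}: a uniformly (here, at every step) jointly connected follower graph, each follower reachable from some leader, and weights bounded below. Applying it gives \(\mathrm{dist}(z_j(t), \operatorname{conv}\{z_i(T)\}) \to 0\) for \(j \in \Omega\).

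\textbf{Backup via contraction.} Should the exact formulation of \cite{ref13} differ, I would argue directly as follows.
\begin{itemize}
\item Let \(m = \min_{i\in\Theta} z_i(T)\) and \(M = \max_{i\in\Theta} z_i(T)\), so the hull is \([m, M]\) in this scalar setting.
\item Define \(V(t) = \max_{j\in\Omega} \max\{z_j(t) - M, 0\}\) and an analogous lower quantity.
\item Because each update is a convex combination, \(V\) is nonincreasing.
\item Over any window of \(L = |\Omega|\) steps, the product \(A(t+L-1)\cdots A(t)\) has every row sum at most \(1 - \alpha^{L}\). This follows from strong connectivity of the \(\Omega\)-subgraph, the positive diagonals (which let mass be retained along walks), and the leaker node at \(n+i\).
\item Hence \(V(t+L) \leq (1 - \alpha^{L})\, V(t)\), so \(V \to 0\) geometrically; likewise for the lower quantity.
\end{itemize}

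\textbf{Main obstacle.} The difficult part is this uniform product bound (or, equivalently, checking that the hypotheses of \cite{ref13} hold with state-dependent \(P(t)\)). The key is that every entry of \(P_{12}(t)\) restricted to rows in \(\Omega\) is at least \((1-\phi)/(2n)\), so every opinion node in \(\Omega\) hears every action node. Together with \(P_{21}\) and \(P_{22}\), this makes the \(\Omega\)-subgraph have diameter at most \(3\), uniformly in \(t\). Consequently \(L = 3\), or a small constant like it, should suffice, giving an explicit contraction rate.
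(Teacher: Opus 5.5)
Your main argument is the same as the paper's proof. You freeze the nodes in $\Theta$ because their rows of $P(t)$ are unit vectors, write the dynamics in block-triangular leader--follower form, and apply Theorem~5.3 of \cite{ref13}, using the edge $i \to n+i$ and the strong connectivity of the subgraph induced by $\Omega$. Your backup contraction argument is also correct: products of $|\Omega|$ consecutive follower blocks have row sums at most $1-\alpha^{|\Omega|}$, so the distance to $[m,M]$ decays geometrically. It adds something the paper lacks, because it does not depend on whether the hypotheses of \cite{ref13} hold for the state-dependent matrices $P(t)$.
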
  

\proof
It is apparent that \(\mathcal{G}\big(P(t)\big)\) is no longer cut-balanced since each opinion node \(i \in \Theta\) only has outgoing edges to other nodes but no incoming edges from them, making Theorem~\ref{thm1} not applicable. Hence, we analyze convergence for the nodes in \(\Omega\) and \(\Theta\) separately.    

From Theorem~\ref{thm2}, if \(i \in \Theta(t)\), then \(z_i(t+1)=z_i(t)\). Under the assumption that the node membership of each SCC of \(\mathcal{G}\big(P(t)\big)\) remains fixed after time \(T\), i.e., \(\Omega(t)=\Omega:=\Omega(T), \, \Theta(t)=\Theta:=\Theta(T),\, \forall \, t \geq T\), it follows that \(z_i(t)=z_i(T)\), \(\forall \, i \in \Theta, \, \forall \, t \geq T\).

By applying a suitable permutation to the state vector \(\mathbf{z}\), we obtain \(\hat{\mathbf{z}} = [\hat{\mathbf{z}}_1, \hat{\mathbf{z}}_2]^\top \), where \(\hat{\mathbf{z}}_1\) and \(\hat{\mathbf{z}}_2\) correspond to the node indices in \(\Omega\) and \(\Theta\), respectively. 

Then, the discrete-time system \eqref{eq:opinion-action-z} can be rewritten as
\begin{equation} \label{e19}
\hat{\mathbf{z}}(t+1)=
\begin{bmatrix}
\hat{\mathbf{z}}_1(t+1) \\[2pt]
\hat{\mathbf{z}}_2(t+1)
\end{bmatrix}=
\begin{bmatrix}
\hat{P}_1(t) & \hat{P}_2(t) \\[2pt]
0 & I
\end{bmatrix}
\begin{bmatrix}
\hat{\mathbf{z}}_1(t) \\[2pt]
\hat{\mathbf{z}}_2(t)
\end{bmatrix}, \, \forall \, t \geq T,
\end{equation}
where \(I\) and \(0\) denote the identity matrix and zero matrix with dimension \(|\Theta| \times |\Theta|\) and \(|\Theta| \times |\Omega|\), respectively. The submatrix \(\hat{P}_1(t) \in \mathbb{R}^{|\Omega| \times |\Omega|}\) is row-substochastic, but each row of the combined block \([\hat{P}_1(t) \ \hat{P}_2(t) ]\) still sums to 1.  

The system \eqref{e19} can be viewed as a distributed containment control problem with multiple stationary leaders, which has been extensively studied in the literature \cite{ref13, ref18, ref19}. Each opinion node \(i \in \Theta\) acts as a stationary leader since \(\hat{\mathbf{z}}_2(t) = \hat{\mathbf{z}}_2(T)\) for all \(t \geq T\), while the nodes \(j \in \Omega\) serve as followers whose states evolve under the influence of leaders. In addition, since \(\hat{\mathbf{z}}\) is merely a permutation of \(\mathbf{z}\), the digraph \(\mathcal{G}\big(\hat{P}(t)\big)\) associated with the state matrix \(\hat{P}(t)\), defined in \eqref{e19}, is identical to \(\mathcal{G}\big(P(t)\big)\), which remains fixed within one time step  \(t \in [t_i,t_i+1)\) and switches at the next time instant \(t_i+1\), where \(t_i \in \mathbb{Z}_{>0}\).

According to Theorem~5.3 in \cite{ref13}, if there exists a finite integer \(m>0\) such that, for any \(t \in \mathbb{Z}_{>0}\), every follower is reachable from at least one leader in the union graph \(\mathcal{G}[A(t)+\dots+A(t+m-1)]\), where \(A(t)\) denotes the adjacency matrix of the interaction graph at time \(t\), then all followers asymptotically converge to the convex hull spanned by the leaders' constant values. In the special case where there is only one leader, all followers converge to its constant value and reach a consensus, as proved in Lemma~9 of \cite{ref20}.

For our system, as shown in the proof of Theorem~2, each opinion node \(i \in \Theta\) has a directed edge to its corresponding action node \(n+i \in \Omega\), and the subgraph induced by \(\Omega\) is always strongly connected. As a result, every follower \(j \in \Omega\) is reachable from every leader \(i \in \Theta\) in \(\mathcal{G}\big(P(t)\big)\) for all \(t \geq T\). Hence, in the union graph \(\mathcal{G}[P(t)+\dots+P(t+m-1)]\), every follower remains reachable from every leader for all \(t \geq T\) and \(m>0\). Thus, our system satisfies the conditions of Theorem~5.3 in \cite{ref13}, implying that the states of nodes in \(\Omega\) ultimately converge to the convex hull spanned by the constant state values of nodes in \(\Theta\), i.e., \(\lim_{t \to \infty}\mathrm{dist}(z_j(t),\operatorname{conv}\{z_i(T) \!\mid\! i \in \Theta \})=0, \, \forall \, j \in \Omega\).
\endproof

Based on Propositions~\ref{prop2} and~\ref{prop3}, the following theorem presents a general convergence result of the proposed system.

\begin{thm} \label{thm3}
Suppose that there exists a finite time instant \(T \in \mathbb{Z}_{> 0}\) after which the structure of the weighted digraph \(\mathcal{G}\big(P(t)\big)\) stabilizes, i.e., it has a fixed number of SCCs whose node memberships remain unchanged for all \(t \geq T\). Then, under \(\phi \in (0,1)\), the proposed system~\eqref{eq:opinion-action-z} converges to one of the following steady states:
\begin{enumerate}
\item[(i)] Consensus: if \(\mathcal{G}\big(P(t)\big)\) remains strongly connected;
\item[(ii)] Clustering where the states of nodes in the sink SCC approach the convex hull spanned by the constant state values of singleton source SCCs: if the condensation digraph of \(\mathcal{G}\big(P(t)\big)\) contains one sink and multiple singleton sources. 
\end{enumerate} 
\end{thm}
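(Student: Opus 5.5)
The plan is to reduce Theorem~\ref{thm3} to Propositions~\ref{prop2} and~\ref{prop3}, using Theorem~\ref{thm2} to show that these two propositions cover every possible stabilized structure. First I will fix \(\phi\in(0,1)\) and the time \(T\) from the hypothesis. By Theorem~\ref{thm2}, at every \(t\geq 1\), and in particular at every \(t\geq T\), the digraph \(\mathcal{G}\big(P(t)\big)\) is either strongly connected or its condensation consists of one sink \(\Omega(t)\) and singleton sources \(\Theta(t)\).

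The hypothesis says that after \(T\) the number of SCCs and their node memberships are fixed. So exactly one of two things holds. Either the number of SCCs equals one for all \(t\geq T\), or it is strictly larger than one for all \(t\geq T\). No switching between the two cases of Theorem~\ref{thm2} can occur after \(T\).

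\textbf{Consensus case.} If there is one SCC, then \(\mathcal{G}\big(P(t)\big)\) is strongly connected for all \(t\geq T\). Proposition~\ref{prop2} then applies directly and gives \(z_i^\ast=z_j^\ast\) for all \(i,j\), which is item~(i).

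\textbf{Clustering case.} Otherwise Theorem~\ref{thm2}(ii) holds at each \(t\geq T\). I must check that the fixed SCC memberships coincide with the sets \(\Omega(t)\) and \(\Theta(t)\) defined in the proof of Theorem~\ref{thm2}. That proof identifies the singleton sources as exactly the opinion nodes with \(|\mathcal{N}_i(t)|=0\), and the sink as the remaining nodes. Hence fixed SCC memberships force \(\Theta(t)=\Theta(T)\) and \(\Omega(t)=\Omega(T)\) for all \(t\geq T\). One subtlety needs care: the unique non-singleton-source SCC must always be the sink \(\Omega(t)\). This holds because \(\Omega(t)\) always contains the \(n\geq 1\) action nodes, so it cannot be a singleton source (for \(n\geq2\)). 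For \(n=1\) the argument is a trivial check. With this identification, the hypotheses of Proposition~\ref{prop3} hold. Proposition~\ref{prop3} then yields that the source states stay constant at \(z_i(T)\) and that the sink states approach \(\operatorname{conv}\{z_i(T)\mid i\in\Theta\}\), which is item~(ii).

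The main obstacle is this identification step: translating ``fixed number of SCCs with unchanged node membership'' into the precise hypothesis \(\Omega(t)\equiv\Omega(T)\), \(\Theta(t)\equiv\Theta(T)\) of Proposition~\ref{prop3}. The argument must also rule out a change in which component plays the role of the sink. I will handle this using the explicit characterization in the proof of Theorem~\ref{thm2}, namely that every action node lies in \(\Omega(t)\) at every time. The rest of the proof is a direct invocation of the earlier propositions.
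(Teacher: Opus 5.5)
Your proposal is correct and follows the same route as the paper, whose proof simply cites Proposition~\ref{prop2} for case~(i) and Proposition~\ref{prop3} for case~(ii). Your extra step goes beyond what the paper writes out. It translates the fixed SCC partition into \(\Omega(t)\equiv\Omega(T)\), \(\Theta(t)\equiv\Theta(T)\) by identifying \(\Omega(t)\) as the component containing the action nodes, which makes explicit a translation the paper leaves implicit.
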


\proof
Case~(i) follows from Proposition~\ref{prop2} and Case~(ii) follows from Proposition~\ref{prop3}.
\endproof

\begin{rem} \label{remark2}
\upshape The convergence results of Theorem~\ref{thm3} rely on the assumption that there exists a finite time instant beyond which the structure of digraph \(\mathcal{G}\big(P(t)\big)\) stabilizes. A rigorous proof of this property is not provided here due to its analytical complexity and is left for future work. Nevertheless, numerical simulations show that this assumption holds consistently, as illustrated in the next section.  
\end{rem}     

\begin{rem} \label{remark3}
\upshape The convergence results of Theorem~\ref{thm3} do not rely on agents' initial conditions \(x_i(0)\) and \(y_i(0)\), because they only determine the value of the state vector at \(t=1\), namely, \(\mathbf{z}(1) = [\, x_1(1), \cdots, x_n(1), y_1(0), \cdots, y_n(0) \,]^\top\), and the earlier proofs do not explicitly depend on \(\mathbf{z}(1)\). 
\end{rem}

\section{Numerical Simulations} \label{sec5}
In this section, we present simulation results of two representative cases corresponding to the scenarios stated in Propositions~\ref{prop2} and~\ref{prop3}, respectively. The number of agents is set to \(n=|\mathcal{V}|=10\) in the simulation, resulting in \(2n=20\) nodes in the digraph \(\mathcal{G}\big(P(t)\big)\), where indices \(\{1,\dots,10\}\) and \(\{11,\dots,20\}\) correspond to the opinion states \(x_i\) and action states \(y_i\), respectively, as defined in \eqref{e6}. Each agent's opinion is initialized as \(x_i(0) \sim U[0,1]\) for all \(i \in \mathcal{V}\), where \(U[0,1]\) denotes the uniform distribution on \([0,1]\). The initial action is set equal to the initial opinion, i.e., \(y_i(0)=x_i(0),\, \forall \,i \in \mathcal{V}\).   

From the opinion updating rule \eqref{e1} and Theorem~\ref{thm2}, a large confidence threshold \(\epsilon\) increases the possibility that each agent has at least one neighbor at time \(t\), making the digraph \(\mathcal{G}\big(P(t)\big)\) more likely to be strongly connected. In contrast, when \(\epsilon\) is small, agents are more likely to be isolated, leading to a digraph \(\mathcal{G}\big(P(t)\big)\) composed of multiple SCCs. Figures~\ref{fig1} and~\ref{fig2} illustrate a representative example for each case separately. 

As shown in Fig.~\ref{fig1}, when \((\epsilon,\phi)=(0.3,0.5)\), all agents' opinions and actions converge to a common value at \(T_1 = 5\), indicating that system~\eqref{eq:opinion-action-z} achieves consensus. The corresponding digraph \(\mathcal{G}\big(P(t)\big)\) remains strongly connected for all \(t \geq T_1\), thereby satisfying the sufficient condition for consensus stated in Proposition~\ref{prop2}. 

In contrast, when \((\epsilon,\phi)=(0.05,0.5)\), as illustrated in Fig.~\ref{fig2}, the agents' opinions and actions split into several clusters and the corresponding digraph \(\mathcal{G}\big(P(t)\big)\) at steady state consists of multiple SCCs. The structure of digraph \(\mathcal{G}\big(P(t)\big)\) stabilizes after \(T_2=4\), satisfying the sufficient condition of Proposition~\ref{prop3}. In this case, nodes 2, 5 and 6 (highlighted in red in Fig.~\ref{fig2b}) have no incoming edges from other nodes, serving as leaders and forming three singleton sources in the condensation digraph of \(\mathcal{G}\big(P(t)\big)\), while the remaining nodes serve as followers and constitute a single sink in the condensation digraph. Furthermore, as shown in Fig.~\ref{fig2c}, the steady-state values of these leaders contain the two extremes among all nodes, where node 5 takes the maximum and node 6 takes the minimum. Thus, the steady-state values of nodes in the sink SCC lie within the convex hull defined by the constant state values of singleton source SCCs, consistent with Proposition~\ref{prop3}.

In summary, the two examples confirm the theoretical results of Theorem~\ref{thm3}, illustrating that the digraph structure stabilizes within finite time and that the system either converges to consensus or exhibits multiple opinion--action clusters in which some opinion nodes act as stationary leaders and other nodes are eventually contained in the convex hull of the leaders.

\begin{figure}[!t]
\centering
\subfloat[]{\includegraphics[width=0.48\columnwidth]{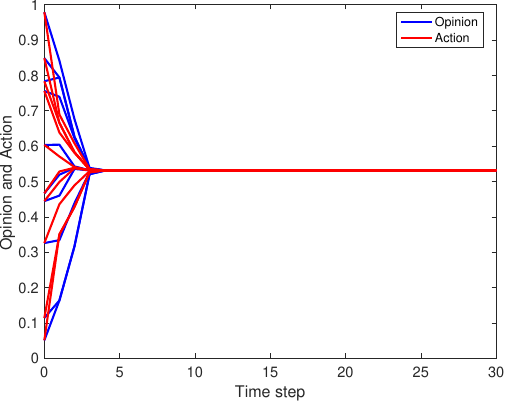}}
\hfill
\subfloat[]{\includegraphics[width=0.48\columnwidth]{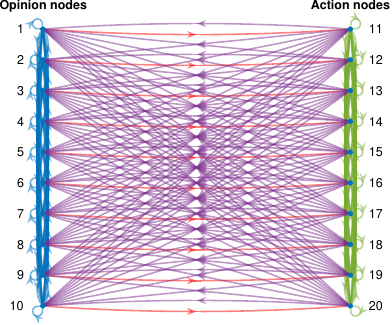}} 
\caption{Simulation results for \((\epsilon,\phi)=(0.3,0.5)\): (a) Evolution of \(x_i(t)\) and \(y_i(t)\), where the blue and red curves represent the trajectories of \(x_i(t)\) and \(y_i(t)\), respectively. (b) Steady-state structure of digraph \(\mathcal{G}\big(P(t)\big)\), where the blue, red, purple, and green curves denote opinion--opinion, opinion--action, action--opinion, and action--action edges, respectively.}
\label{fig1}
\end{figure}
\begin{figure}[!t]
\centering
\subfloat[]{\includegraphics[width=0.6\columnwidth]{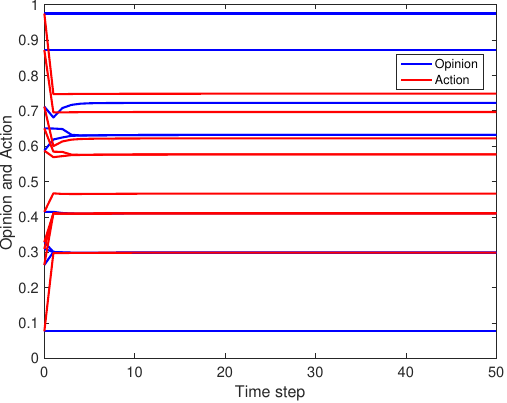}}  \\
\vspace{-5pt}
\subfloat[]{\includegraphics[width=0.48\columnwidth]{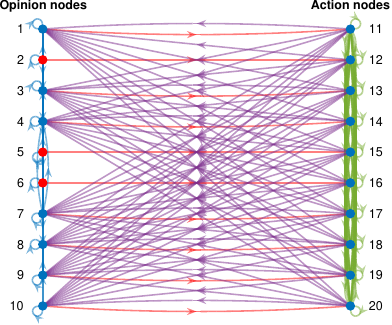}
\label{fig2b}} 
\hfill
\subfloat[]{\includegraphics[width=0.48\columnwidth]{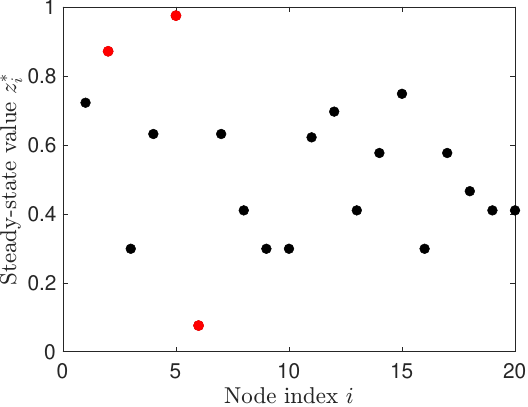}
\label{fig2c}} 
\caption{Simulation results for \((\epsilon,\phi)=(0.05,0.5)\): (a) Evolution of \(x_i(t)\) and \(y_i(t)\), where the blue and red curves represent the trajectories of \(x_i(t)\) and \(y_i(t)\), respectively. (b) Steady-state structure of digraph \(\mathcal{G}\big(P(t)\big)\), where the blue, red, purple, and green curves denote opinion--opinion, opinion--action, action--opinion, and action--action edges, respectively. Opinion nodes highlighted in red have no incoming edges from other nodes and act as leaders. (c) Steady-state node values, where the red points correspond to the leader nodes shown in panel~(b).}
\label{fig2}
\end{figure}

\section{Conclusions and Future Work} \label{sec6}
In this paper, we analyzed the convergence properties of the opinion--action coevolution model proposed in \cite{ref11}. The results show that the model’s steady-state behavior is primarily governed by the agents’ homogeneous confidence threshold \(\epsilon\). When \(\epsilon\) is large, agents are more susceptible to different opinions influenced by their neighbors, leading the entire population toward consensus. Conversely, when \(\epsilon\) is small, some agents become socially isolated and their opinions remain fixed, effectively acting as leaders that influence the evolution of other agents’ opinions and all actions. These analytical findings provide a solid theoretical explanation for the model’s steady-state behavior and its relevance to social dynamics. A promising direction for future work is to identify the threshold of \(\epsilon\) and \(\phi\) dividing the two convergence scenarios through sensitivity analysis. 

\addtolength{\textheight}{-12cm}   






\end{document}